\newcommand{\Xomit}[1]{}
\newcommand{\remove}[1]{}
\newlength {\squarewidth}
\newtheorem{theorem}{Theorem}
\newtheorem{lemma}{Lemma}
\newcommand{\toto}{xxx}
\newcounter{linecounter}
\newcommand{\linenumbering}{\ifthenelse{\value{linecounter}<10}{(0\arabic{linecounter})}{(\arabic{linecounter})}}
\renewcommand{\line}[1]{\refstepcounter{linecounter}\label{#1}\linenumbering}
\newcommand{\resetline}[1]{\setcounter{linecounter}{0}#1}
\renewcommand{\thelinecounter}{\ifnum \value{linecounter} > 9\else 0\fi \arabic{linecounter}}
\newcommand{\signed}[1]{\theta_{#1}}
\newcommand{\signedm}[2]{\big( #2, \theta_{#1} \big)}
\newcommand{\aux}[2]{\textsc{aux}[#1](#2)}
\newcommand{\auxproof}[1]{\langle#1\rangle}
\newcommand{\proofs}{\mathit{proofs}}
\newcommand{\isvalid}{{\sf is\_valid}}
\newcommand{\auxvalues}{\mathit{aux\_values}}
\newcommand{\coordmod}{(r_i \modulo n)}
\newcommand{\send}{\mathit{\sf send}}
\newcommand{\sto}{\mathit{\sf to}}
\newcommand{\receive}{\mathit{\sf receive}}
\newcommand{\broadcast}{\mathit{\sf broadcast}}
\newcommand{\ttrue}{\mathit{\tt true}}
\newcommand{\tfalse}{\mathit{\tt false}}
\newcommand{\binpropose}{{\sf bin\_propose}}
\newcommand{\performbroadcast}{{\sf perform\_broadcast}}
\newcommand{\wait}{{\sf wait\_until}}
\renewcommand{\return}{{\sf return}}
\newcommand{\BAMP}{{\cal BAMP}_{n,t}}
\newcommand{\ES}{\mathit{\Diamond Synch}}
\newcommand{\values}{\mathit{values}}
\newcommand{\modulo}{{\sf ~mod~}}
\newcommand{\decide}{{\sf decide}}
\title{A Simple and Efficient Binary Byzantine Consensus Algorithm using Cryptography and Partial Synchrony}
\author[1]{Tyler Crain\thanks{This work was partially done while Tyler Crain was a research assistant at The University of Sydney,
 supported by the Australian Research Council's Discovery Projects funding scheme (project number 180104030).}}
\affil[1]{tcrainwork@gmail.com}
\begin{document}

\maketitle

\begin{abstract}
  This paper describes a simple and efficient Binary Byzantine faulty tolerant
  consensus algorithm using a weak round coordinator and the partial synchrony assumption to ensure liveness.
  In the algorithm, non-faulty nodes perform an initial broadcast followed by a executing a series of rounds
  consisting of a single message broadcast until termination.
  Each message is accompanied by a cryptographic proof of its validity.
  In odd rounds the binary value $1$ can be decided, in even round $0$.
  Up to one third of the nodes can be faulty and termination is ensured within a
  number of round of a constant factor of the number of faults.
  Experiments show termination can be reached in less than $200$ milliseconds with
  $300$ Amazon EC2 instances spread across $5$ continents even with partial initial
  disagreement.
\end{abstract}













\section{Introduction and related work.}
Binary byzantine consensus concerns the problem of getting a set of distinct processes
distributed across a network to agree on a single binary value $0$ or $1$ where processes
can fail in arbitrary ways.
It is well known that this problem is impossible in an asynchronous network with at least
one faulty process~\cite{FLP85}. To get around this, algorithms can employ
randomization~\cite{A03, BO83, B87, BT83, CR93, FP90, KS16, MMR14, PCR14, R83, T84},
or rely on an additional synchrony assumption.
This work assumes partial synchrony~\cite{DDS87, DLS88} which ensures that after some point in time
there exists an (unknown) upper bound on message delay and difference in speed between processes.
Furthermore, the algorithm assumes that less than one third of the processes are faulty 
and ensures termination in $O(t)$ messages steps, both of which are well known lower bounds~\cite{LSP82, FL82}.

While there are many algorithms that solve this problem with the same bounds and assumptions~\cite{DDS87, DLS88},
this algorithm focuses on simplicity and efficiency.
Namely, it starts with each process broadcasting an initial proposal, then executing a series
of rounds that consist of broadcasting a single message then waiting to receive
a threshold of valid messages from other processes.
In the good case agreement happens in the first round.
To ensure termination with the
benefit of the partial synchrony assumption, the algorithm uses a {\it weak round coordinator}~\cite{CGLR18}.
This coordinator is
used to help agreement by suggesting a value to decide when both $0$ and $1$ are valid.
While classic round coordinators~\cite{CT96,DLS88} will rely on the coordinator for termination in all cases,
the weak round coordinator is not needed for termination in the expected case.
Note that this is a practical distinction as the bounds for termination are not effected.

The design of the algorithm is primarily based on two previous algorithms;~\cite{CKS05} and ~\cite{CGLR18}.
Similar to~\cite{CKS05}, a set of cryptographic signatures are included with each message proving
its validity. Differently~\cite{CKS05} uses randomization for termination and each process broadcasts
several messages per round.
Similar to~\cite{CGLR18}, the algorithm can decide $1$ in odd rounds or $0$ in even rounds, uses a weak round coordinator, and
assumes partial synchrony for termination.
Differently,~\cite{CGLR18} has the advantage of not requiring
cryptographic signatures, but requires processes to broadcast several messages per round.
Note that the design of~\cite{CGLR18} is based off of the randomized algorithm of~\cite{MMR14}.

While the binary consensus problem only allows process to agree on a single binary value,
there exist many reductions to multi-value consensus~\cite{MR17, MRT00, TC84, ZC09} allowing processes to agree on arbitrary values.
Furthermore many algorithms~\cite{BSA14, CL02} exists that solve multi-value consensus directly with the same assumptions.
Additionally algorithms exists that make many different assumptions about the model
such as synchrony~\cite{FM97}, different fault models~\cite{LVCQV16, MA06, PSL80}
solve different definitions of consensus~\cite{NCV05}, and so on.

\section{A Byzantine Computation Model.}
\label{sec:model}

This section describes the assumed computation model.

\paragraph{Asynchronous processes.}
The system is made up of a set $\Pi$ of $n$ asynchronous sequential processes,
namely $\Pi = \{p_1,\ldots,p_n\}$; $i$ is called the ``index'' of $p_i$. 
``Asynchronous'' means that each process proceeds at its own speed,
which can vary with time and remains unknown to the other processes.
``Sequential'' means that a process executes one step at a time.
This does not prevent it from executing several threads with an appropriate
multiplexing. 
%
Both notations
$i\in Y$ and $p_i\in Y$ are used to say that $p_i$ belongs to the set $Y$.

\paragraph{Communication network.}
\label{sec:basic-comm-operations}
The processes communicate by exchanging messages through
an asynchronous reliable point-to-point network. ``Asynchronous''  means that
there is no bound on message transfer delays, but these delays are finite.
``Reliable'' means that the network does not lose, duplicate, modify, or
create messages. ``Point-to-point'' means that any pair of processes
is connected by a bidirectional channel.
%
A process $p_i$ sends a message to a process $p_j$ by invoking the primitive 
``$\send$ {\sc tag}$(m)$ $\sto~p_j$'', where {\sc tag} is the type
of the message and $m$ its content. To simplify the presentation, it is
assumed that a process can send messages to itself. A process $p_i$ receives 
a message by executing the primitive ``$\receive()$''.
The macro-operation $\broadcast$ {\sc tag}$(m)$ is  used as a shortcut for
``{\bf for each} $p_i \in \Pi$  {\bf do} $\send$ {\sc tag}$(m)$ $\sto~p_j$
{\bf end for}''. 

\paragraph{Signatures.}
Asymmetric cryptography allow processes to sign messages.
Each process $p_i$ has a public key known by everyone and a private key
known only by $p_i$.
All messages are signed using the private key can be validated by any process
with the corresponding public key, allowing the process to identify the signer of the message.
Signatures are assumed to be unforgeable.
A process will ignore any message that is malformed or contains an invalid signature.

\paragraph{Failure model.}
Up to $t$ processes can exhibit a {\it Byzantine} behavior~\cite{PSL80}.
 A Byzantine process is a process that behaves
arbitrarily: it can crash, fail to send or receive messages, send
arbitrary messages, start in an arbitrary state, perform arbitrary state
transitions, etc. Moreover, Byzantine processes can collude 
to ``pollute'' the computation (e.g., by sending  messages with the same 
content, while they should send messages with distinct content if 
they were non-faulty). 
A process that exhibits a Byzantine behavior is called {\it faulty}.
Otherwise, it is {\it non-faulty}.  
Let us notice that, 
as unforgeable signatures are used
no Byzantine process can impersonate another process.   
Byzantine processes can control the network by modifying
the order in which messages are received, but they cannot
postpone forever message receptions.  

\paragraph{Additional synchrony assumption.}
It is well-known that there is no consensus algorithm ensuring both
safety and liveness properties in fully asynchronous message-passing
systems in which even a single process may crash~\cite{FLP85}.  As the
crash failure model is less severe than the Byzantine failure model,
the consensus impossibility remains true if processes may commit
Byzantine failures.
To circumvent such an impossibility, and ensure the consensus termination
property, 
we enrich the model with additional synchrony
assumptions as follows:
%
After some finite time,
there exists an (unknown) upper bound on message transfer delays.
Furthermore there is an (unknown) upper bound on the difference in speeds between non-faulty processes.
This partial synchrony assumption is denoted $\ES$~\cite{DDS87, DLS88}.



\section{Binary Byzantine Consensus.}
\label{sec:byz-consensus}

\subsection{The Binary Consensus Problem.}


%

In this problem processes input a value to the algorithm, called their \emph{proposal},
run an algorithm consisting of several rounds,
and eventually output a value called their \emph{decision}.
Let $\cal V$ be the set of values that can be proposed.
While  $\cal V$ can contain any number ($\geq 2$) of values
in multi-valued consensus, it contains only two values in binary consensus, 
e.g., ${\cal V} =\{0,1\}$.
Assuming that
each non-faulty process proposes a value, the binary Byzantine consensus (BBC) problem is for 
each of them to
decide on a value in such a way that the following properties are
satisfied:
\begin{itemize}
\item BBC-Termination. Every non-faulty process eventually decides on a value.
\item BBC-Agreement.   No two non-faulty processes decide on different values.
\item BBC-Validity.  If all non-faulty processes propose the same value, no
other value can be decided.
\end{itemize}

\paragraph{Notations.}

\begin{itemize}
\item The acronym ${\BAMP}[\emptyset]$ is used to denote the 
  basic Byzantine Asynchronous Message-Passing computation model;
  $\emptyset$ means that there is no additional assumption. 
\item The basic computation model strengthened with the additional constraint $t<n/3$
  is denoted ${\BAMP}[t<n/3]$.
\item The computation model strengthened with the partial synchrony constraint $\ES$
  is denoted ${\BAMP}[t<n/3,\ES]$.
\item A signature of process $i$ is $\signed{i}$
\item A message $m$ signed by process $i$ is $\signedm{i}{m}$
\end{itemize}


\subsection{A Safe and Live Consensus Algorithm in ${\BAMP}[t<n/3,\ES{}]$.}
\label{ssec:live-bbc}

\paragraph{Message types.}
The following message types are used by the consensus.
\begin{itemize}
\item $\aux{r}{v}$. An {\sc aux} message contains a round number $r$ and a binary value $v$.
\item $\auxproof{\signedm{i}{\aux{r}{v}},\proofs}$. A tuple containing an {\sc aux} message signed
  by process $i$ and a set $\proofs$ containing signed
  {\sc aux} messages from a previous round that are used to prove $v$ is a \emph{valid} binary proposal for round $r$.
\end{itemize}

\paragraph{Valid Notation.}
For a given round $r \geq 1$ a binary value $b$ is \emph{valid} if
$b$ has been proposed by a non-faulty process and
$\neg b$ has not been decided in any round before $r$.
An $\auxproof{\signedm{i}{\aux{r}{v}},\proofs}$ is valid if binary value $v$ is valid in round $r$.
The following section describes a function that is used to compute the validity of a message
given $r$, $v$, and $\proofs$ as input.

\paragraph{Variables.}
The following variables are used throughout all rounds of the consensus.
\begin{itemize}
\item $r_i$. Current round number of process $i$.
\item $\auxvalues_i$. Set of valid signed {\sc aux} messages received by process $i$
  throughout all rounds of the consensus.
\item $timers_i$. A map from a round to a timer at process $i$.
  There are two timers per round so the timers for round $r$ are entries $r \times 2$ and $r \times 2 + 1$
  in the map.
  Timer variables can be started then expire after a predetermined amount of time
  (see Section~\ref{sec:timers} for how times are chosen).
  Starting a timer that has already started or expired is a no-op.
\end{itemize}

\begin{figure*}[ht!]
\centering{
\fbox{
\begin{minipage}[t]{150mm}
\footnotesize
\renewcommand{\baselinestretch}{2.5}
\resetline
\begin{tabbing}
aaaA\=aaA\=aaaA\=aaaaaaaaaA\kill

{\bf opera}\={\bf tion} ${\binpropose}(v_i)$ {\bf is}\\

\line{BBC-01} \> $r_i \leftarrow 0$; $\auxvalues_i \leftarrow \emptyset$; \\


\line{BBC-02} \> ${\broadcast}$ $\auxproof{\signedm{i}{\aux{r_i}{v_i}},\emptyset}$;
{\it \scriptsize  \hfill \color{gray}{// Broadcast the initial proposal}}\\


\line{BBC-03} \> {\bf while} $(\ttrue)$  {\bf do}\\

\line{BBC-04} \>\> $r_i \leftarrow r_i+1$;\\

\line{BBC-05} \>\> {\bf if} ($i = \coordmod$) {\bf then} ${\performbroadcast}()$ {\bf end if};
{\it \scriptsize  \hfill \color{gray}{// Coordinator broadcasts before the timer}}\\

\line{BBC-06} \>\> Start $timers_i[r_i \times 2]$ if not yet done; $\wait$ $timers_i[r_i \times 2]$ has expired; \\

\line{BBC-07} \>\> {\bf if} ($i \neq \coordmod$) {\bf then} ${\performbroadcast}()$ {\bf end if};
{\it \scriptsize  \hfill \color{gray}{// Non-coordinators broadcast after the timer}}\\

\line{BBC-08} \>\> $\wait$ $(n-t)$ valid $\aux{r_i}{}$ messages have been received from $(n-t)$ different processes; \\

\line{BBC-09} \>\> Start $timers_i[r_i \times 2 + 1]$ if not yet done; $\wait$  $timers_i[r_i \times 2 + 1]$ has expired;  \\




\line{BBC-10} \>\> $b_i \leftarrow r_i \modulo 2$; \\

\line{BBC-11} \>\> {\bf if} \= ($n-t$) valid $\aux{r_i}{b_i}$ messages have been received from $(n-t)$ different processes \\

\line{BBC-12} \>\>\> {\bf then} $\decide(b_i)$ if not yet done {\bf  end if} \\






\line{BBC-13} \> {\bf end while}. \\~\\

{\bf proce}\={\bf dure} ${\performbroadcast}()$ {\bf is}\\

\line{BBC-14} \> $values_i \leftarrow$ compute $values_i$ as the set of binary values that satisfy the $\isvalid$ predicate for \\
\>\>   $~~~~~~~~~~$ $\auxvalues_i$ and round $r_i$; \\

\line{BBC-15} \> $bv_i \leftarrow (r_i + 1) \modulo 2$; \\

\line{BBC-16} \> {\bf if} received $\auxproof{\signedm{\coordmod}{\aux{r_i}{p\_val}}}$ from process $p_{\coordmod}$
where $p\_val \in \values_i$ \\

\line{BBC-17} \>\> {\bf then} $est_i \leftarrow p\_val$
{\it \scriptsize  \hfill \color{gray}{// Support the coordinator's value for the next round}}\\

\line{BBC-18} \>\> {\bf else if} $(bv_i \in \values_i)$ {\bf then} $est_i \leftarrow bv_i$
{\it \scriptsize  \hfill \color{gray}{// Otherwise prefer the modulo of the next/previous round}}\\

\line{BBC-19} \>\> {\bf else} $est_i \leftarrow \neg bv_i$ {\bf end if}; \\

\line{BBC-20} \> $proofs_i \leftarrow $ compute $proofs_i$ as a set of signed {\sc aux} messages from $\auxvalues_i$ that satisfy the \\

\>\>   $~~~~~~~~~$  $\isvalid$ predicate for binary value $est_i$ and round $r_i$; \\

\line{BBC-21} \> ${\broadcast}$ $\auxproof{\signedm{i}{\aux{r_i}{est_i}},proofs}$. \\~\\

{\bf when} $\auxproof{\signedm{j}{\aux{r_j}{est_j}},proofs}$  {\bf is received}\\

\line{BBC-22} \>   
   {\bf if} \=
   $\big(\isvalid(r_j, est_j, proofs)\big)$   {\bf then} \\

   \line{BBC-23} \> \> $\proofs \leftarrow proofs \setminus $ $\{$any messages in $\proofs$ not needed to satisfy the $\isvalid$ predicate$\}$. \\
   
\line{BBC-24} \> \>  
  $\auxvalues_i \leftarrow \auxvalues_i \cup \{\signedm{j}{\aux{r_j}{est_j}}\} \cup proofs$; \\


  \line{BBC-25} \>  {\bf end if};\\

  \line{BBC-26} \> $\rho_i \leftarrow$ the largest round for which $\auxvalues_i$ contains $t+1$ messages from different processes; \\
  \line{BBC-27} \> For all $v_i \in \mathbb{N}$ such that $v_i \leq \rho_i \times 2$, set $timers_i[v_i]$ to expired if not yet done.
       {\it \scriptsize  \hfill \color{gray}{// Catch-up mechanism}}

\end{tabbing}
\normalsize
\end{minipage}
}
\caption{A safe algorithm for the binary Byzantine consensus in ${\BAMP}[t<n/3]$.}
\label{algo-BBC} 
\vspace{-1em}
}
\end{figure*}

\begin{figure*}[ht!]
\centering{
\fbox{
\begin{minipage}[t]{150mm}
\footnotesize
\renewcommand{\baselinestretch}{2.5}
\resetline
\begin{tabbing}
aaaA\=aaA\=aaaA\=aaaaaaaaaA\kill

{\bf pred}\={\bf icate} ${\isvalid}(r, est, proofs)$ {\bf is}\\

\line{IV-01} \> {\bf if} $(r=0)$   {\bf then} $\return(\ttrue)$ {\bf  end if};\\

\line{IV-02} \> {\bf if} $(r=1)$ {\bf then} \\

\line{IV-03} \>\> {\bf if} ($\exists$ signed messages $\aux{0}{est}$ from $t+1$ different processes in $\proofs$) \\

\>\>\>  {\bf then} $\return(\ttrue)$ \\

\>\>\> {\bf else} $\return(\tfalse)$ \\

\>\> {\bf end if}; \\

\line{IV-04} \> {\bf end if}; \\

\line{IV-05} \> $b \leftarrow (r-1) \modulo 2$;\\

\line{IV-06} \> {\bf if} $(b=est)$ {\bf then}\\

\line{IV-07} \>\> {\bf if} $(r=2$ $\wedge$ $\exists$ signed messages $\aux{0}{b}$ from $t+1$ different processes in $proofs)$ \\

\>\>\>  {\bf then} $\return(\ttrue)$ \\




\line{IV-08} \>\> {\bf else if} $(\exists$ signed messages $\aux{r-2}{b}$ from $n-t$ different processes in $proofs)$ \\

\>\>\> {\bf then} $\return(\ttrue)$; \\

\line{IV-09} \>\> {\bf end if}; \\

\line{IV-10} \> {\bf else} \\

\line{IV-11} \>\> {\bf if} $(\exists$ signed messages $\aux{r-1}{\neg b}$ from $n-t$ different processes in $proofs)$ \\

\>\>\> {\bf then} $\return(\ttrue)$ \\
\>\> {\bf end if}; \\

\line{IV-12} \> {\bf end if}; \\

\line{IV-13} \> $\return(\tfalse)$. \\

\end{tabbing}
\normalsize
\end{minipage}
}
\caption{Algorithm for the $\isvalid$ predicate.}
\label{algo-is-safe} 
\vspace{-1em}
}
\end{figure*}

\paragraph{Algorithm description.}
Figures~\ref{algo-BBC} and~\ref{algo-is-safe} describe the pseudo-code
for the algorithm.
The $\binpropose$ operation of Figure~\ref{algo-BBC} contains the main loop of the algorithm.
The $\performbroadcast$ procedure describes the code used to prepare and broadcast valid signed {\sc aux}
messages for each round.
The lines~\ref{BBC-22}-\ref{BBC-26} handle the reception of signed {\sc aux} messages.
The $\isvalid$ predicate of Figure~\ref{algo-is-safe} describes the procedure used
to check if a binary value is valid for a given round and a set of signed {\sc aux} messages.

To start the consensus, each process $p_i$ calls $\binpropose$ with its initial binary proposal $v_i$ (Figure~\ref{algo-BBC}).
Line~\ref{BBC-01} initializes local variables, then on Line~\ref{BBC-02} the process
sends an a signed {\sc aux} message with round $0$, binary value $v_i$, and an
empty set for $\proofs$ as any round $0$ message is considered to be valid.
The process then repeats the while loop of Lines~\ref{BBC-03}-\ref{BBC-12} for each round.

A round starts by incrementing the processes round counter on Line~\ref{BBC-04}.
If the process $p_i$ is the round coordinator (i.e. $i = \coordmod$) it then invokes
$\performbroadcast$ (line~\ref{BBC-05}) to generate and broadcast a valid signed {\sc aux} message.
Otherwise the process waits for a timeout (line~\ref{BBC-06}) before invoking $\performbroadcast$ (line~\ref{BBC-07}).
If the timer is large enough, non-faulty processes will receive the coordinator's {\sc aux} message
before broadcasting their own {\sc aux} message using the same binary value if valid.

Non-faulty processes then wait until $(n-t)$ valid {\sc aux} messages have been received for the round,
before waiting on a second timer to expire (lines~\ref{BBC-08}-\ref{BBC-09}).
If the timer is large enough it will ensure all non-faulty processes receive valid signed {\sc aux} messages
from all non-faulty processes for that round.
If $n-t$ of the signed {\sc aux} messages support the same binary value $b_i = r_i \modulo 2$ then the process
decides $b_i$.
The process then continues on to the next round.
Given that $t < n/3$, if a non-faulty process receives $n-t$ signed {\sc aux} messages
in round $r_i$ supporting $b_i$, then any set of $n-t$ signed {\sc aux} from round $r_i$ will contain
at least one {\sc aux} message supporting $b_i$.
With this, the $\isvalid$ predicate will ensure that $\neg b_i$ is not valid in any round after $r_i$
at all non-faulty processes.
As a result, in all following rounds, non-faulty processes will only broadcast {\sc aux}
messages supporting $b_i$ and decide in round $r_i$ or a following round.

Lines~\ref{BBC-14}-\ref{BBC-21} describe the $\performbroadcast$ procedure used to create and
broadcast valid signed {\sc aux} messages for each round.
The procedure starts by using the $\isvalid$ predicate to compute the set of valid binary values for
round $r_i$ given the set $\auxvalues_i$ of valid {\sc aux} messages received so far (line~\ref{BBC-14}).
Lines~\ref{BBC-16}-\ref{BBC-19} are used to decide which of these values to broadcast.
First, if a valid {\sc aux} message from the round coordinator was used, then that value is broadcast(lines~\ref{BBC-16}-\ref{BBC-17}).
Second, $(r_i + 1) \modulo 2$ is chosen if it is valid (line~\ref{BBC-18}),
otherwise the only remaining valid value is chosen (line~\ref{BBC-19}).
By prefering the binary value of the coordinator, termination is ensured when a non-faulty process
is chosen as the coordinator and if large enough timeouts are eventually used,
as all non-faulty processes broadcast the same binary value.
Note that the protocol would remain correct if lines~\ref{BBC-18}-\ref{BBC-19} were changed so that the process
simply chooses a random valid value to broadcast, but they are included as they encourage non-faulty processes to support the same
value and reach a decision even without the presence of the coordinator.

Lines~\ref{BBC-22}-\ref{BBC-27} describe what happens when a signed {\sc aux} message and its proofs are received.
If the $\isvalid$ predicate indicates that this message is valid then the signed {\sc aux} message and its
proofs are added to the $\auxvalues_i$ set (lines~\ref{BBC-22}-\ref{BBC-24}).
Line~\ref{BBC-24} ensures that no invalid messages are added to $\auxvalues_i$.
Next $\rho_i$ is computed as the largest round for which the process has received at least $t+1$ valid
signed {\sc aux} messages from different processes (line~\ref{BBC-25}).
The process then sets all timers up to round $\rho_i$ as expired.
The threshold of $t+1$ ensures that at least one non-faulty processes has reached round $\rho_i$,
then setting the timeouts up to this round as expired helps this process catch up to the faster ones.

\paragraph{Is\_valid predicate description.}
Figure~\ref{algo-is-safe} describes the $\isvalid$ predicate that is called by Algorithm~\ref{algo-BBC}
to check if a binary value is valid.  It takes as input a round $r$, a binary value $est$
and a set of signed {\sc aux} messages $\proofs$.
As previously mentioned, the predicate
should return $\ttrue$ if $\proofs$ ensures that
(i) $est$ was proposed by a non-faulty process
and (ii) $\neg est$ has not been
decided by any non-faulty process in any round before $r$.
Otherwise $\tfalse$ should be returned.

For round $0$ the predicate immediately returns $\ttrue$ as any initial proposal is valid (line~\ref{IV-01}).
For round $1$, as no value can be decided in round $0$, the predicate returns $\ttrue$
if $\proofs$ contains at least $t+1$ round $0$ messages with binary value $est$ (line~\ref{IV-03}), i.e.
if (i) is satisfied.

For all other rounds the value $b \leftarrow (r-1) \modulo 2$ is computed.
Note that $b$ is the binary value that could be decided in round $r-1$.
If $(b=est)$ then $\ttrue$ is returned in the following cases.
\begin{enumerate}
\item $r=2$ and $\proofs$ contains at least $t+1$ signed $\aux{0}{est}$ messages.
  If $r=2$ and $b = est$ then $est = 1$ and by~\ref{BBC-11}-\ref{BBC-12} of Figure~\ref{algo-BBC}
  only $1$ could have been decided in rounds $r < 2$.
  Thus, if $\proofs$ contains at least $t+1$ signed $\aux{0}{est}$ then both (i) and (ii) are satisfied
  and $\ttrue$ is returned (line~\ref{IV-07}).
\item $r > 2$ and $\proofs$ contains at least $n-t$ signed $\aux{r-2}{est}$ messages.
  By lines~\ref{BBC-11}-\ref{BBC-12} of Figure~\ref{algo-BBC}, in round $r-2$ only $\neg b$ could have been decided.
  Furthermore, if $n-t$ signed $\aux{r-2}{b}$ are contained in proofs then all sets of $n-t$ signed {\sc aux}
  messages from round $r-2$ must contain at least one $\aux{r-2}{b}$ message, thus $\neg b$ could not have been
  decided by a non-faulty process in round $r-2$. An induction argument can then be used to show that $\neg b$
  could not have been decided in any round before $r-2$ so $\ttrue$ is returned (line~\ref{IV-08}).
  \label{itm:r2p}
\end{enumerate}

Otherwise if $(b=\neg est)$ then $\ttrue$ is returned only if $n-t$ signed $\aux{r-1}{est}$ messages
are received (line~\ref{IV-11}). This is enough to ensure $\neg est$ was not decided in round $r-1$ or before
using the same argument as case~\ref{itm:r2p} above.
If none of these cases are met the $\tfalse$ is returned.

\subsubsection{Timers}
\label{sec:timers}
The timers used on lines~\ref{BBC-06} and~\ref{BBC-09} of Figure~\ref{algo-BBC} are used to ensure processes
eventually execute rounds synchronously given $\ES$.
The duration of the timers must increase by (at least) a constant amount in every constant number of rounds
to ensure that they are eventually large enough to encompass the bound given by $\ES$ and that slow
non-faulty processes can catch up to the faster ones.
The faster the timeout increases, the fewer rounds will be needed to reach synchrony, but the longer process might
have to wait unnecessarily.
Finding this ideal trade-off would require a detailed inspection of the specific case where the algorithm is expected to run.
Fortunately, as this algorithm uses a weak coordinator, in most cases it does not need a coordinator or timeouts to terminate
(in fact the experiments in Section~\ref{sec:exp} never actually needed to use a coordinator).
Given this it is suggested to set the timeouts to $0$ and disable lines~\ref{BBC-16}-\ref{BBC-17}
of the $\performbroadcast$ procedure that are used to support the coordinator for some constant number of rounds.
If the algorithm does not terminate quickly in this case then it is suggested to set the timeout at the
expected average network delay of the system, and increase it by a small constant each following round.

\subsection{Proofs.}

This section shows that the algorithm presented in Figure~\ref{algo-BBC} solves the Binary consensus problem in ${\BAMP}[t<n/3,\ES]$
through a series of lemmas.

\begin{lemma}
  \label{lem:maj}
  For a given round $r$ there can be at most one binary value $b$ for which there exists at least $n-t$
  signed $\aux{r}{b}$ messages from different processes.
\end{lemma}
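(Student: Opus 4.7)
The plan is to use a standard quorum intersection argument combined with the fact that a non-faulty process broadcasts exactly one signed $\aux{r}{\cdot}$ message per round.

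First I would assume for contradiction that there exist two distinct binary values $b$ and $\neg b$ such that, for round $r$, at least $n-t$ distinct processes have each signed an $\aux{r}{b}$ message and at least $n-t$ distinct processes have each signed an $\aux{r}{\neg b}$ message. Let $S_b$ and $S_{\neg b}$ denote these two sets of signers. By inclusion-exclusion, $|S_b \cap S_{\neg b}| \geq |S_b| + |S_{\neg b}| - n \geq 2(n-t) - n = n - 2t$. Since $t < n/3$, we have $n - 2t > t$, so $|S_b \cap S_{\neg b}| \geq t+1$, and in particular $S_b \cap S_{\neg b}$ contains at least one non-faulty process $p_i$.

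Next I would argue that a non-faulty $p_i$ cannot have signed both an $\aux{r}{b}$ and an $\aux{r}{\neg b}$ message. Inspecting the code of Figure~\ref{algo-BBC}, in each iteration of the while loop at lines~\ref{BBC-03}--\ref{BBC-13} the process executes the body of the round for a single value of $r_i$, and the only place a non-faulty process broadcasts a signed $\aux{r_i}{\cdot}$ message is line~\ref{BBC-21} inside ${\performbroadcast}()$. Moreover, each iteration of the loop invokes ${\performbroadcast}()$ exactly once: either through line~\ref{BBC-05} if $p_i$ is the round coordinator, or through line~\ref{BBC-07} otherwise (the guards $i = \coordmod$ and $i \neq \coordmod$ are complementary). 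Hence $p_i$ produces a single signed $\aux{r}{est_i}$ message for round $r$, for a single estimate $est_i$. By the unforgeability of signatures assumed in the model, no other process could have produced a signed message of the form $\signedm{i}{\aux{r}{\cdot}}$ on $p_i$'s behalf.

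The main (and only real) obstacle is making sure the ``one broadcast per round'' claim is airtight; the quorum intersection is routine. Once the two observations are combined, $p_i$ would have to have broadcast both $\signedm{i}{\aux{r}{b}}$ and $\signedm{i}{\aux{r}{\neg b}}$, contradicting the fact that ${\performbroadcast}()$ fixes a single $est_i$ in lines~\ref{BBC-16}--\ref{BBC-19} before signing and broadcasting on line~\ref{BBC-21}. This contradiction establishes the lemma.
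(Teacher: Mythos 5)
Your proof is correct and is essentially the paper's own argument made explicit: the paper's one-line proof appeals to exactly the two facts you establish (quorum intersection given $t<n/3$, and that a non-faulty process signs and broadcasts at most one {\sc aux} message per round). No issues; the added detail about $\performbroadcast$ being invoked exactly once per loop iteration and signature unforgeability only strengthens the write-up.
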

\begin{proof}
  This follows from the fact that there are at most $t < n/3$ faulty processes and that non-faulty processes
  sign and broadcast at most one {\sc aux} message per round.
\end{proof}

\begin{lemma}
  \label{lem:propose}
  In round $r > 0$ non-faulty processes will only sign and broadcast {\sc aux} messages containing binary values proposed by non-faulty processes.
\end{lemma}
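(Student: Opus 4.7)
The plan is to proceed by induction on the round number $r$, exploiting the fact that the $\performbroadcast$ procedure (line~\ref{BBC-21}) only broadcasts a value $est_i$ that lies in $values_i$, and hence by line~\ref{BBC-14} must satisfy the $\isvalid$ predicate of Figure~\ref{algo-is-safe}. The key quantitative observation I would use throughout is that, since $t<n/3$, any collection of $t+1$ distinct signers, and in particular any collection of $n-t$ distinct signers, must contain at least one non-faulty process.

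For the base case $r=1$, a non-faulty process that broadcasts an \textsc{aux} message carrying $est_i$ must supply a $\proofs$ set satisfying the branch at line~\ref{IV-03}: there are $t+1$ signed $\aux{0}{est_i}$ messages from distinct processes. By the quantitative observation, at least one of these signatures is from a non-faulty process, and by unforgeability of signatures that process actually broadcast $\aux{0}{est_i}$ on line~\ref{BBC-02}, meaning its initial proposal was $est_i$.

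For the inductive step, assume the claim for all rounds $r' $ with $1 \leq r' < r$, and consider a non-faulty process that broadcasts $est_i$ in round $r \geq 2$. The predicate $\isvalid(r,est_i,\proofs)$ must have returned $\ttrue$ through one of the branches IV-07, IV-08, or IV-11. In each case $\proofs$ contains either $t+1$ or $n-t$ signed \textsc{aux} messages from some round $r'\in\{0, r-2, r-1\}$ carrying the value $est_i$ (in the IV-07 branch, $r'=0$; in IV-08, $r'=r-2$; in IV-11, $r'=r-1$). The IV-07 branch, like the base case, directly exhibits a non-faulty proposer via a round-$0$ signature. In the other two branches, at least one of the signers is non-faulty; that non-faulty process signed and broadcast an \textsc{aux} message carrying $est_i$ in round $r'\geq 1$, so by the induction hypothesis $est_i$ was proposed by some non-faulty process.

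The main subtlety, and the only place where some care is required, is to make sure the induction is well-founded when the branches reach back into earlier rounds. The IV-07 branch jumps back to round $0$, not to some $r'\geq 1$ where the induction hypothesis is available, so it must be handled by the base-case argument (i.e.\ directly as a proposal). Every other branch reaches back to a round $r'\geq 1$ that is strictly smaller than $r$, so the induction hypothesis applies cleanly. Once this case split is in place, the proof reduces to the two counting facts $t+1>t$ and $n-t>t$, combined with unforgeability of signatures.
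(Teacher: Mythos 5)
Your proof is correct and takes essentially the same route as the paper's: induction on the round, using the fact that every branch of $\isvalid$ demands at least $t+1$ distinct signers (so at least one non-faulty one) for the value being broadcast; the paper simply states this more tersely without spelling out the base case and branch analysis. One cosmetic imprecision: the IV-08 branch at $r=2$ reaches back to round $r'=0$, not to a round $r'\geq 1$ as you claim, but this is harmless since a round-$0$ signature from a non-faulty process directly witnesses a proposal, exactly as in your base case.
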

\begin{proof}
  By line~\ref{BBC-13} of Figure~\ref{algo-BBC} a non-faulty process will only broadcast values that satisfy the $\isvalid$ predicate.
  By lines~\ref{IV-03},~\ref{IV-07},~\ref{IV-08},~\ref{IV-11} of the $\isvalid$ predicate, in any round $r > 0$
  a binary value will only satisfy the predicate if the process has received at least $t+1$ signed {\sc aux} messages
  from different processes.
  Given that there are at most $t$ faults and by induction, non-faulty processes will only broadcast values proposed by non-faulty processes.
\end{proof}

\begin{lemma}
  \label{lem:dec}
All non-faulty processes decide the same value.
\end{lemma}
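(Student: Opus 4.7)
The plan is to proceed by contradiction: assume two non-faulty processes decide different values, and let $r^*$ be the smallest round in which any non-faulty process decides. Write $b^* = r^* \modulo 2$ for the value decided by such a process $p_i$, and observe $r^* \geq 1$ because line~\ref{BBC-04} increments $r_i$ before any decision check.

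First I would rule out collisions at round $r^*$ itself: by line~\ref{BBC-11}, $p_i$ has received $n - t$ signed $\aux{r^*}{b^*}$ messages from distinct processes, so by Lemma~\ref{lem:maj} no $n - t$ signed $\aux{r^*}{\neg b^*}$ messages from distinct processes can exist, and the decision condition for $\neg b^*$ at round $r^*$ cannot fire at any non-faulty process.

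The main step is an induction on $r > r^*$ showing that no non-faulty process ever broadcasts a signed $\aux{r}{\neg b^*}$ message. Since non-faulty processes only broadcast values in $\values_i$ (lines~\ref{BBC-14}--\ref{BBC-21}), it suffices to prove that $\isvalid(r, \neg b^*, \cdot)$ fails for every $r > r^*$; along the way I would also briefly check that $\values_i$ is non-empty when $p_i$ broadcasts, so that the fallback branch on line~\ref{BBC-19} never forces $est_i = \neg b^*$ by default (this follows from the observation that the $n - t$ valid $\aux{r - 1}{\cdot}$ messages required on line~\ref{BBC-08} of round $r - 1$ carry proofs that populate $\auxvalues_i$ with enough older signatures to validate at least one binary value at round $r$). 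Because $r^* \geq 1$, the special-case rules on lines~\ref{IV-03} and~\ref{IV-07} are either inapplicable or blocked by a quick parity check, so $\neg b^*$ can only satisfy $\isvalid$ via line~\ref{IV-08} or~\ref{IV-11}; each of these demands $n - t$ signed $\aux{r-k}{\neg b^*}$ messages from distinct processes, with $k \in \{1, 2\}$ chosen so that $r - k$ matches the parity of $r^*$. For the base cases $r \in \{r^* + 1, r^* + 2\}$, the referenced round $r - k$ equals $r^*$, and Lemma~\ref{lem:maj} directly forbids the required $n - t$ signatures. For the inductive step $r \geq r^* + 3$, the referenced round $r - k$ lies strictly between $r^*$ and $r$; by the induction hypothesis no non-faulty process has broadcast $\aux{r-k}{\neg b^*}$, so at most $t$ such signed messages can exist in the system, which is $< n - t$.

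Putting the pieces together, for every $r \geq r^*$ at most $t$ signed $\aux{r}{\neg b^*}$ messages from distinct processes can exist, strictly less than the $n - t$ threshold of line~\ref{BBC-11}. Hence no non-faulty process ever decides $\neg b^*$, and by the minimality of $r^*$ every deciding non-faulty process must decide $b^*$, contradicting the initial assumption. The main obstacle is the parity bookkeeping inside the induction together with confirming $\values_i \neq \emptyset$ so no stray non-faulty signature on $\neg b^*$ slips in through line~\ref{BBC-19}; leveraging $r^* \geq 1$ collapses the base cases cleanly onto Lemma~\ref{lem:maj} and sidesteps the small-$r$ special rules~\ref{IV-03}--\ref{IV-07} entirely.
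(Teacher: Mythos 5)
Your proof is correct and follows essentially the same route as the paper's: apply Lemma~\ref{lem:maj} at the decision round and then induct forward through lines~\ref{IV-08} and~\ref{IV-11} of the $\isvalid$ predicate to show $\neg b^*$ can never again be validated, hence never broadcast or decided. Taking $r^*$ minimal up front neatly subsumes the paper's separate ``$r_y < r_x$'' case, and your aside about line~\ref{BBC-19} and the non-emptiness of $\values_i$ (which the paper defers to Lemma~\ref{lem:enough}) is a legitimate extra check rather than a divergence.
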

\begin{proof}

  Assume a non-faulty process decides in round $r_x$.
  By line~\ref{BBC-11} the process must have received $n-t$ signed $\aux{r_x}{r_x \modulo 2}$ messages from different processes
  and decided $v_x=(r_x) \modulo 2$.
  Also by line~\ref{BBC-11} for this or a different non-faulty process to decide $\neg v_x$, the process must
  receive $n-t$ signed $\aux{r_y}{\neg v_x}$ messages from different processes in some round $r_y$.
  Furthermore by lines~\ref{BBC-10} and~\ref{BBC-11}, in round a round $r$ only $(r) \modulo 2$ can be decided, thus $r_y \neq r_x$.

  First assume $r_y > r_x$.
  By Lemma~\ref{lem:maj} no process will receive $n-t$ signed $\aux{r_x}{\neg v_x}$ messages from different processes and
  by line~\ref{BBC-14} a non-faulty process will only sign and broadcast a value that satisfies the $\isvalid$ predicate.
  Given that there are less than $n-t$ signed $\aux{r_x}{\neg v_x}$ messages from different processes, $\neg v_x$
  will not be valid in either round $r_x+1$ or $r_x+2$ (lines~\ref{IV-08},~\ref{IV-11} of the $\isvalid$ predicate),
  thus messages supporting $\neg v_x$ will not be added to $\auxvalues_i$ on line~\ref{BBC-24} of Figure~\ref{algo-BBC} for those rounds,
  and will not be broadcast by non-faulty processes (by lines~\ref{BBC-14},~\ref{BBC-20},~\ref{BBC-21}).
  Given no non-faulty process broadcasts $\neg v_X$ in rounds $r_x+1$ or $r_x+2$,
  the $\isvalid$ predicate will ensure $\neg v_x$ will remain invalid in later rounds and will not
  be broadcast by non-faulty processes in round after $r_x+2$
  (note that the case on line~\ref{IV-02} of the $\isvalid$ predicate does not apply as $r_y > 0$,
  and neither does the case on line~\ref{IV-07} because if $r_y=2$ then $\neg v_x \neq 1$).
  Thus no non-faulty process will decide $\neg v_x$ in a round after $r_x$.

  Next assume $r_y < r_x$.
  If a process receives $n-t$ signed $\aux{r_y}{\neg v_x}$ from different processes and decides ${\neg v_x}$ in round $r_y$
  then using the same argument as above, no non-faulty process will receive $n-t$ signed $\aux{r_x}{v_x}$ in any following round
  and will not decide $v_x$.
  Thus by contradiction no process will decide ${\neg v_x}$ in a round prior to $r_x$.
\end{proof}

\begin{lemma}
  \label{lem:enough}
  For any round $r > 0$ all non-faulty processes will (eventually) receive enough valid messages to
  satisfy the $\isvalid$ predicate of Figure~\ref{algo-is-safe} for the round.
\end{lemma}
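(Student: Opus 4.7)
The plan is to prove the lemma by strong induction on $r$.

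For the base case $r = 1$, I would observe that by line~\ref{BBC-02} each of the at least $n - t$ non-faulty processes broadcasts a signed $\aux{0}{v_i}$ message carrying its initial proposal. Reliable channels guarantee eventual delivery and $n - t \geq 2t + 1$, so a pigeonhole argument on the binary proposals of non-faulty senders yields a value $v$ supported by at least $t + 1$ signed $\aux{0}{v}$ messages from distinct processes, which is exactly what line~\ref{IV-03} of the $\isvalid$ predicate demands for round $1$.

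For the inductive step $r > 1$, I would assume the claim for every $1 \leq r' < r$. Combined with the catch-up mechanism of line~\ref{BBC-27} and the eventual expiration of the timers on lines~\ref{BBC-06} and~\ref{BBC-09}, this hypothesis would imply that every non-faulty process reaches round $r - 1$, completes $\performbroadcast$ (because $values_j$ is non-empty on line~\ref{BBC-14}), passes the wait on line~\ref{BBC-08}, and broadcasts a signed $\aux{r-1}{est_j}$ message. Writing $b = (r - 1) \bmod 2$, I would then split into two cases based on the content of the non-faulty round $r - 1$ broadcasts. In the first case, no non-faulty process broadcasts $b$, so at least $n - t$ non-faulty broadcasts all carry $\neg b$; every non-faulty process eventually sees $\geq n - t$ signed $\aux{r-1}{\neg b}$ messages and satisfies line~\ref{IV-11} for round $r$ with $est = \neg b$. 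In the second case, some non-faulty $p_j$ broadcasts $b$; lines~\ref{BBC-16}-\ref{BBC-19} then force $\isvalid(r - 1, b, proofs_j)$ to have returned $\ttrue$, which by either line~\ref{IV-03} (when $r - 1 = 1$) or line~\ref{IV-11} (when $r - 1 \geq 2$, using $b \neq (r - 2) \bmod 2$) means $proofs_j$ already contains precisely the witnesses required by line~\ref{IV-07} (if $r = 2$) or line~\ref{IV-08} (if $r > 2$) of the round $r$ predicate with $est = b$. Line~\ref{BBC-24} then propagates these witnesses to $\auxvalues_i$ at every non-faulty process, so each of them eventually satisfies $\isvalid(r, b, \cdot)$.

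The main obstacle will be keeping the round-dependent branches of the $\isvalid$ predicate aligned: making sure that the proof set carried by a round $r - 1$ broadcast of $b$ (whose exact form depends on whether $r - 1 = 1$ or $r - 1 \geq 2$) really does coincide with the proof set the round $r$ predicate requires for $b$ (again depending on whether $r = 2$ or $r > 2$). Once that matching is checked by cases, everything else reduces to reliable delivery together with the secondary induction that all non-faulty processes actually reach round $r - 1$.
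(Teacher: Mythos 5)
Your proposal is correct and follows essentially the same route as the paper: induction on the round, with the base case settled by a pigeonhole argument on the $n-t$ round-$0$ broadcasts, and the inductive step settled by a case split on whether the value $(r-1)\bmod 2$ appears among the round $r-1$ broadcasts, matching the proof set carried by such a message (line~\ref{IV-03} or~\ref{IV-11} for round $r-1$) to the witnesses required by line~\ref{IV-07} or~\ref{IV-08} for round $r$. The only cosmetic differences are that you phrase the dichotomy over non-faulty broadcasts rather than over received valid messages, and that the relevant validity check is enforced by lines~\ref{BBC-14} and~\ref{BBC-20} rather than lines~\ref{BBC-16}--\ref{BBC-19}; neither affects the argument.
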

\begin{proof}
  
  By line~\ref{IV-01} of the $\isvalid$ predicate all signed round $0$ {\sc aux} messages are valid and
  by line~\ref{BBC-02} of Figure~\ref{algo-BBC} all non-faulty processes sign and broadcast a round $0$ {\sc aux} message.
  All non-faulty processes will then receive at least $n-t$ signed round $0$ {\sc aux} messages from different processes.
  Given that $t < n/3$, of these $n-t$ messages at least $t+1$ messages supporting a single binary
  value will be received, satisfying line~\ref{IV-03} of the $\isvalid$ predicate for round $1$.
  All non-faulty processes will then sign and broadcast a valid {\sc aux} message for round $1$ and advance to round $2$.

  In round $2$ non-faulty processes will receive at least $n-t$ signed valid round $1$ {\sc aux} messages from different processes.
  If $n-t$ of these messages are of the form $\aux{1}{0}$, then by line~\ref{IV-11} of Figure~\ref{algo-is-safe} the $\isvalid$ predicate
  is satisfied for round $2$. Otherwise, at least one of the valid signed {\sc aux} messages must be of the form $\aux{1}{1}$.
  By line~\ref{BBC-22} of Figure~\ref{algo-BBC}
  this message must contain proofs generated by the $\isvalid$ predicate supporting binary value $1$ for round $1$.
  This can only happen on line~\ref{IV-03} of Figure~\ref{algo-is-safe} by including $t+1$ messages of the form $\aux{0}{1}$.
  Notice then that by line~\ref{IV-07} these proofs also satisfy the $\isvalid$ predicate for round $2$.
  Thus, all non-faulty processes will then sign and broadcast a valid {\sc aux} message for round $2$ and advance to round $3$. 

  Now assume by induction all non-faulty processes have received enough valid messages to satisfy the $\isvalid$ predicate
  for a round $r-1$. All processes will then sign and broadcast a valid {\sc aux} message on line~\ref{BBC-21} of
  Figure~\ref{algo-BBC} and advance to round $r$. Following this all non-faulty processes will receive at least $n-t$
  valid signed {\sc aux} messages from round $r-1$. If $n-t$ of these messages are of the form $\aux{r-1}{r \modulo 2}$
  then the $\isvalid$ predicate for round $r$ is satisfied by line~\ref{IV-11} of Figure~\ref{algo-is-safe}.

  Otherwise, at least one of the valid signed {\sc aux} messages must be of the form $\aux{r-1}{\neg (r \modulo 2)}$.
  By line~\ref{BBC-22} of Figure~\ref{algo-BBC}
  this message must contain proofs generated by the $\isvalid$ predicate supporting binary value $\neg(r \modulo 2)$ for round $r-1$.
  For this, on line~\ref{IV-05} of Figure~\ref{algo-is-safe} we have $b = ((r-2) \modulo 2)$, or equivalently $b = (r \modulo 2)$
  and $est = \neg(r \modulo 2)$, i.e. $est \neq b$.
  Therefore by line~\ref{IV-11} the proofs for message $\aux{r-1}{\neg (r \modulo 2)}$ must be $n-t$ messages of the form $\aux{r-2}{\neg (r \modulo 2)}$.
  Now consider the $\isvalid$ predicate with input round $r$ and $est = {\neg r \modulo 2}$, or equivalently $est = ((r-1) \modulo 2)$,
  in this case the predicate is satisfied by $n-t$ messages of the form $\aux{r-2}{\neg (r \modulo 2)}$ (line~\ref{IV-08}),
  which is exactly the set of proofs that were included in the $\aux{r-1}{\neg r \modulo 2}$ message, completing the induction proof.
\end{proof}

\begin{lemma}
  \label{lem:rp2}
  Let $r$ be the smallest round in which a non-faulty process decides. All non-faulty processes will decide in either
  round $r$ or $r+2$.
\end{lemma}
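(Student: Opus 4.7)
The plan is to chain Lemmas~\ref{lem:maj}, \ref{lem:dec}, and \ref{lem:enough} to establish two facts: (i) no non-faulty process decides in round $r+1$, and (ii) every non-faulty process decides in round $r+2$. First I fix $v = r \modulo 2$; by line~\ref{BBC-10} the only value decideable in round $r_i$ is $r_i \modulo 2$, and by Lemma~\ref{lem:dec} every non-faulty decider must output $v$. Hence decisions can only occur in rounds of the same parity as $r$, which immediately rules out a non-faulty decision of $v$ in round $r+1$.

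To rule out a decision of $\neg v$ in round $r+1$, I will use the fact that the round-$r$ decider collected $n-t$ signed $\aux{r}{v}$ messages, so by Lemma~\ref{lem:maj} no $n-t$ signed $\aux{r}{\neg v}$ messages can ever be assembled. In the $\isvalid$ predicate applied to round $r+1$ with $est = \neg v$ we have $b = r \modulo 2 = v \neq est$, so the relevant clause is line~\ref{IV-11}, which requires precisely those missing $n-t$ signed $\aux{r}{\neg v}$ messages. Hence $\neg v$ is never valid in round $r+1$; by lines~\ref{BBC-14}--\ref{BBC-21} no non-faulty process broadcasts $\aux{r+1}{\neg v}$, and the total number of signers of $\aux{r+1}{\neg v}$ is at most $t < n-t$, blocking the decision condition at line~\ref{BBC-11}.

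For the second part I want to force every non-faulty process to decide in round $r+2$. In round $r+1$, Lemma~\ref{lem:enough} guarantees that every non-faulty process eventually has a valid value; since $\neg v$ is excluded, $v$ is the unique valid value, so every non-faulty process signs and broadcasts $\aux{r+1}{v}$, producing at least $n-t$ such signatures. In round $r+2$ (which is at least $3$, since $r \geq 1$), the $\isvalid$ predicate for $\neg v$ reaches line~\ref{IV-08} because $b = (r+1) \modulo 2 = \neg v = est$; it fails for the same reason as before, namely the absence of $n-t$ signed $\aux{r}{\neg v}$. Meanwhile, the predicate for $v$ reaches line~\ref{IV-11} and is satisfied by the $n-t$ signed $\aux{r+1}{v}$ messages. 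Therefore $v$ is the unique valid value in round $r+2$: every non-faulty process broadcasts $\aux{r+2}{v}$, collects $n-t$ valid aux messages all of which support $v$, and decides at line~\ref{BBC-12} because $b_i = (r+2) \modulo 2 = v$.

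The main obstacle I expect is the careful case analysis of the $\isvalid$ predicate: one must pick the correct clause for each (round, value) pair and in particular verify that the corner case $r+1 = 2$ (when $r=1$) does not need separate handling, because the $b \neq est$ branch still correctly sends us to line~\ref{IV-11}. The natural reachability worry — that the slower non-faulty processes actually make it to rounds $r+1$ and $r+2$ to participate — is absorbed into the statement of Lemma~\ref{lem:enough} rather than re-litigated through the timer and catch-up machinery.
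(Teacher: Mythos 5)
Your proof is correct and follows essentially the same route as the paper's: invoke Lemma~\ref{lem:maj} to rule out $n-t$ signed $\aux{r}{\neg v}$ messages, conclude via the $\isvalid$ clauses that $\neg v$ is invalid in rounds after $r$, and use Lemma~\ref{lem:enough} to force all non-faulty processes to broadcast $v$ in rounds $r+1$ and $r+2$ and hence decide at line~\ref{BBC-12} in round $r+2$. You merely inline the $\isvalid$ case analysis (including the $r=1$ corner case) that the paper delegates to its proof of Lemma~\ref{lem:dec}, which makes the argument more self-contained but not different in substance.
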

\begin{proof}
  Given line~\ref{BBC-11} of Figure~\ref{algo-BBC}, a non-faulty process decides $v=(r) \modulo 2$ in round $r$ after receiving
  $n-t$ signed $\aux{r}{v}$ from different processes.
  By Lemma~\ref{lem:maj} no process will receive $n-t$ signed $\aux{r}{\neg v}$ messages from different processes,
  thus by Lemma~\ref{lem:dec}, ${\neg v}$ will not satisfy the $\isvalid$ predicate in round any round after $r$.
  From this and by Lemma~\ref{lem:enough}, in all rounds after $r$ the $\isvalid$ predicate on line~\ref{BBC-14}
  will return $\{v\}$ and all non-faulty processes will broadcast {\sc aux} messages supporting $v$.
  Thus by line~\ref{BBC-08} of Figure~\ref{algo-BBC} a non-faulty process will wait until it receives $n-t$
  signed $\aux{r+2}{v}$ messages from different processes, and decide on line~\ref{BBC-12}.
\end{proof}

\begin{lemma}
  \label{lem:sync}
  If all non-faulty processes execute synchronous rounds, then termination is ensured within $O(t)$ rounds.
\end{lemma}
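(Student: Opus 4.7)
The plan rests on the weak coordinator: since the round-$r$ coordinator is $p_{r \bmod n}$ and at most $t$ processes are Byzantine, any window of $t+1$ consecutive rounds contains at least one round with a non-faulty coordinator, so the first such round appears by round $t+1$. I will argue that in a synchronous round $r$ with a non-faulty coordinator, every non-faulty process broadcasts the coordinator's value $v$, after which decision follows within a constant number of additional rounds --- either immediately in round $r$ when $v = r \bmod 2$, or in the first subsequent round of matching parity once line~\ref{IV-08}'s residual support for $\neg v$ is exhausted (invoking a subsequent non-faulty coordinator within another $t+1$ rounds if needed). Total termination is then $O(t)$ rounds, and Lemma~\ref{lem:enough} guarantees no process stalls in any round.

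For coordinator convergence in a synchronous round $r$ whose coordinator is non-faulty, $\ES{}$ and the synchronous-round assumption make the timer on line~\ref{BBC-06} long enough that every non-faulty $p_i$ receives the coordinator's signed $\aux{r}{v}$ message and its accompanying proofs before invoking $\performbroadcast$ on line~\ref{BBC-07}. Because the coordinator built $\proofs$ on line~\ref{BBC-20} to satisfy $\isvalid$, the message is accepted on line~\ref{BBC-22} and $\proofs$ is added to $\auxvalues_i$ on line~\ref{BBC-24}; hence $v \in values_i$ at line~\ref{BBC-14}, and lines~\ref{BBC-16}--\ref{BBC-17} force $est_i \leftarrow v$. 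Every non-faulty process then broadcasts a valid $\aux{r}{v}$ message, yielding $n-t$ signed such messages, and Lemma~\ref{lem:maj} precludes $n-t$ signed $\aux{r}{\neg v}$ messages ever existing. If $v = r \bmod 2$, line~\ref{BBC-11} fires at every non-faulty process and all decide in round $r$; otherwise the argument of Lemma~\ref{lem:dec} shows $\neg v$ cannot be revived via line~\ref{IV-11} in round $r+1$ nor via line~\ref{IV-08} in round $r+2$ using round-$r$ evidence, so only the narrow residuals of line~\ref{IV-07} (when $r = 1$) or line~\ref{IV-08} with pre-existing $\aux{r-1}{\neg v}$ proofs can keep $\neg v$ valid.

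The main obstacle is controlling this residual validity in the parity-mismatch case $v = (r+1) \bmod 2$: line~\ref{IV-08} can inherit $\aux{r-1}{\neg v}$ proofs from before convergence, briefly keeping $\neg v$ valid and causing lines~\ref{BBC-18}--\ref{BBC-19} to alternate preference across rounds. The key observation is that such residual proof sets are drawn from a finite pre-coordinator history and cannot be renewed once non-faulty broadcasts of $\neg v$ cease; moreover, the next non-faulty coordinator (within another $t+1$ rounds by the same pigeonhole argument) can be used to re-force common broadcast of $v$ in a round of matching parity, triggering line~\ref{BBC-11}. Together these observations bound the total time to first decision by a constant multiple of $t$, and Lemma~\ref{lem:rp2} then propagates the decision to every non-faulty process within two further rounds.
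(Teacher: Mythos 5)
Your skeleton matches the paper's: rotate the coordinator, use the pigeonhole bound (at most $t$ faulty processes can occupy coordinator slots, so non-faulty coordinators recur within $O(t)$ rounds), and argue that in a synchronous round with a non-faulty coordinator every non-faulty process adopts the coordinator's value via lines~\ref{BBC-16}--\ref{BBC-17}, after which parity determines when line~\ref{BBC-11} fires. The paper packages the endgame slightly differently: it waits for \emph{two consecutive} rounds $r$, $r+1$ with non-faulty coordinators (still reached within $O(t)$ rounds) and asserts that all non-faulty processes broadcast the coordinator's value in each, deciding in whichever round has the matching parity.

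The genuine gap is in your handling of the parity-mismatch case, which you correctly identify as the crux but do not resolve. After the coordinated round $r$ in which all non-faulty processes broadcast $v=(r+1)\bmod 2$, you claim the residual validity of $\neg v$ ``cannot be renewed once non-faulty broadcasts of $\neg v$ cease.'' But those broadcasts need not cease: in round $r+1$ the preferred value of line~\ref{BBC-15} is $bv=(r+2)\bmod 2=\neg v$, so if $\neg v$ is still valid there (via $n-t$ signed $\aux{r-1}{\neg v}$ messages and line~\ref{IV-08}), every non-faulty process reaching line~\ref{BBC-18} will broadcast $\neg v$ in round $r+1$, producing $n-t$ fresh $\aux{r+1}{\neg v}$ messages that keep $\neg v$ valid in rounds $r+2$ and $r+3$. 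The residual is self-renewing, not ``drawn from a finite pre-coordinator history.'' Your fallback---that a later non-faulty coordinator ``re-forces common broadcast of $v$ in a round of matching parity''---is also unsupported: a coordinator cannot choose the parity of its round, and its own value is selected by lines~\ref{BBC-18}--\ref{BBC-19} (it cannot have received its own round-$r'$ message at line~\ref{BBC-16}), which again prefer $(r'+1)\bmod 2$, i.e.\ exactly the value that cannot be decided in round $r'$. So a non-faulty coordinator guarantees that all non-faulty processes broadcast the \emph{same} value, but nothing in your argument shows that this common value ever has the parity required by line~\ref{BBC-11}; you would need an additional argument that the non-decidable value eventually fails the $\isvalid$ test in some coordinated round, forcing line~\ref{BBC-19} to select the decidable one. (To be fair, the paper's own proof is equally terse here---it simply asserts that two consecutive non-faulty coordinators yield a decision in round $r$ or $r+1$---but your specific supporting claims are the ones that do not hold as stated.)
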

\begin{proof}
  All non-faulty processes executing synchronous rounds will receive all messages from all $n-t$ non-faulty in the round before
  preceding to the next round. Now given two consecutive synchronous rounds $r$ and $r+1$, where coordinators $p_{\coordmod}$
  and $p_{(r+1 \modulo n)}$ are non-faulty processes,
  all non-faulty processes will broadcast {\sc aux} messages with the same binary value in rounds
  $r+1$ and $r+2$ (i.e. the value broadcast by the coordinators by line~\ref{BBC-17} of Figure~\ref{algo-BBC})
  and decide in either round $r$ or $r+1$.
  Given that there are at most $t$ faulty processes and $t < n/3$, two
  consecutive rounds that have non-faulty coordinators will be reached after at most $O(t)$ rounds.
\end{proof}

\begin{lemma}
  \label{lem:term}
  Given the $\ES$ assumption, all non-faulty processes eventually execute synchronous rounds.
\end{lemma}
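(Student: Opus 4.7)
The plan is to combine three ingredients: the partial synchrony bound provided by $\ES$, the unboundedly growing timeout schedule described in Section~\ref{sec:timers}, and the catch-up mechanism of lines~\ref{BBC-26}-\ref{BBC-27} of Figure~\ref{algo-BBC}.

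First I would instantiate $\ES$: fix a time $\tau$ and a constant $\Delta$ such that after $\tau$ every message exchanged between two non-faulty processes is delivered within $\Delta$ time units and the relative speeds of non-faulty processes are bounded by some constant factor. By Lemma~\ref{lem:enough}, every non-faulty process eventually has enough valid messages to pass the $\isvalid$ predicate in every round, and so enters every round; in particular there is a round $R_0$ such that every non-faulty process is in a round $\geq R_0$ at some time after $\tau$. Since, by the prescription of Section~\ref{sec:timers}, both timers grow by at least a constant every constant number of rounds, I can pick a round $R \geq R_0$ from which both timers on lines~\ref{BBC-06} and~\ref{BBC-09} have duration at least $2\Delta$ (plus enough slack to absorb the bounded relative-speed drift).

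Next I would argue by induction on $r \geq R$ that (a) all non-faulty processes enter round $r$ within a bounded wall-clock window and (b) before any non-faulty process advances to round $r+1$, it has received the round-$r$ {\sc aux} message of every other non-faulty process; condition (b) is exactly the synchronous-round condition used in Lemma~\ref{lem:sync}. The inductive step uses two observations. First, once any non-faulty process broadcasts its round-$r$ {\sc aux} message, all other non-faulty processes receive it within $\Delta$; since at least $n-t \geq 2t+1$ non-faulty processes broadcast, any straggler will very quickly have $t+1$ valid round-$r$ messages in its $\auxvalues$, triggering the catch-up mechanism on line~\ref{BBC-27} to expire its first timer for round $r$ and causing it to broadcast immediately on line~\ref{BBC-07}. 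Second, because the second timer has duration at least $2\Delta$, every non-faulty process that crosses the $(n-t)$-threshold on line~\ref{BBC-08} still remains on line~\ref{BBC-09} long enough for the remaining non-faulty round-$r$ broadcasts to arrive.

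The main obstacle will be keeping the window in (a) bounded uniformly in $r$, rather than letting it grow round by round. A careless induction could allow the lag of slow processes to accumulate, but the catch-up mechanism prevents this: the entry time of the slowest non-faulty process into round $r+1$ is bounded by the entry time of the fastest one into round $r$ plus $O(\Delta)$ plus one full timer duration, independent of past history. Verifying this bookkeeping — in particular that the first-timer wait is effectively eliminated for stragglers by the catch-up rule, and that the second-timer wait is long enough but does not itself cause the window to grow — is the delicate part of the argument.
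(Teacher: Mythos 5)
There is a genuine gap, and it sits exactly at the point you flag as ``the delicate part.'' Your induction needs a base case: that at the starting round $R$ all non-faulty processes enter within a bounded wall-clock window. Nothing you have said establishes this. When the timers first exceed $2\Delta$, the fastest non-faulty process may be hundreds of rounds ahead of the slowest one (only $n-2t \geq t+1$ non-faulty processes need to participate for the leaders to keep advancing, so up to $n-2t-1$ non-faulty processes can lag arbitrarily far behind in round number). Your key claim --- that the slowest process enters round $r+1$ within ``$O(\Delta)$ plus one full timer duration'' of the fastest entering round $r$, \emph{independent of past history} --- is false in that situation: the catch-up rule of lines~\ref{BBC-26}--\ref{BBC-27} expires the laggard's timers, but the laggard must still traverse every intermediate round sequentially (execute $\performbroadcast$, pass the $(n-t)$ wait on line~\ref{BBC-08}, etc.), and the bounded-relative-speed assumption only gives a \emph{lower} bound of some $\delta>0$ per round traversed, so closing a lag of $k$ rounds costs $\Omega(k\delta)$, not $O(\Delta)$.

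The missing idea --- which is the actual content of the paper's proof --- is a rate comparison that closes the round-lag in the first place. Because the timer durations grow by at least a constant $\alpha$ per round, the fastest non-faulty process needs time roughly $\sum_{j<r} 2j\alpha$ (quadratic in $r$) to reach round $r$, whereas a laggard, with all its timers up to round $r$ expired by the catch-up mechanism, reaches round $r$ in time linear in $r$. Since the quadratic eventually dominates the linear, the lag between fastest and slowest shrinks to within one round, and only \emph{then} does an argument like your inductive step (timers $\geq 2\Delta$ plus catch-up keep the window bounded thereafter) take over. You invoke the growing timers only to guarantee a single duration of $2\Delta$, not to slow the leaders relative to the laggards, so your proof as written cannot get the induction off the ground. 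To repair it, add the polynomial-versus-linear catch-up argument (or an equivalent ``the leaders' per-round delay grows without bound while the laggards' per-round cost is constant'' argument) before starting the induction.
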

\begin{proof}
  As described in Section~\ref{sec:timers} the timer for any round $r$ is larger than the timer for round $r-1$ by at least some
  constant $\alpha$.
  By the timers on lines~\ref{BBC-06},~\ref{BBC-09} of Figure~\ref{algo-BBC} and that the threshold for skipping a timer (line~\ref{BBC-26}) is $t+1$,
  no non-faulty process will reach round $r$ faster than the sum of all the timeouts (as given by the fastest
  non-faulty process bounded by $\ES$) of the previous rounds. Thus to reach round $r$, a non-faulty process must have waited
  at least $\sum_{j=0}^{j<r} 2 \times j \times \alpha$ units of time for timers to expire (i.e a polynomial number of time units).
  
  Now given that a non-faulty process will only start the timer on line~\ref{BBC-09} for round $r$ once it has received $n-t$ messages
  and that $t < n/3$, the process must have received messages from at least $t+1$ non-faulty processes for round $r$ when it starts the timer.
  Thus all non-faulty processes will receive $t+1$ messages from round $r$ within a constant bound $c$ given by $\ES$
  and by line~\ref{BBC-27} will skip all timeouts until round $r$ and reach the round in a bound given by $c \times r$
  (i.e. in a linear amount of time units).

  As follows, the fastest non-faulty process reaches round $r$ in a polynomial amount of time bounded by $\ES$, and the slowest
  non-faulty process reaches round $r$ in a linear amount of time bounded by $\ES$.
  Given that a polynomial function grows faster than a linear one,
  eventually the slowest non-faulty processes will reach a round $r$ early enough so that all non-faulty processes
  receive messages from all other non-faulty processes (given $\ES$) before any non-faulty process progresses to round $r+1$.
  Furthermore once this threshold is reached it hold for all following rounds (given $\ES$).
\end{proof}

\begin{theorem}
  The algorithm presented in Figure~\ref{algo-BBC} solves the Binary consensus problem in ${\BAMP}[t<n/3,\ES]$.
\end{theorem}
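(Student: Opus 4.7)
The plan is to observe that the theorem is essentially a corollary of the lemmas already established, and so the proof will be a short piece of stitching that addresses each of the three BBC properties in turn. I would first recall that BBC-Agreement, BBC-Validity, and BBC-Termination are the three obligations, and then dispatch them one at a time, making explicit which lemma (or combination of lemmas) supplies each.

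For BBC-Agreement, I would simply cite Lemma~\ref{lem:dec}, which already states that no two non-faulty processes decide different values. For BBC-Validity, I would argue that if every non-faulty process proposes the same binary value $v$, then by Lemma~\ref{lem:propose} no non-faulty process ever signs and broadcasts a round $r > 0$ \textsc{aux} message carrying $\neg v$; since by Lemma~\ref{lem:maj} at most $t < n/3$ signatures on $\neg v$ can be produced in any round, the threshold of $n-t$ signed $\aux{r}{\neg v}$ messages required by line~\ref{BBC-11} to decide $\neg v$ can never be met. Hence only $v$ can be decided. (Validity in round $0$ follows directly from line~\ref{BBC-02}.)

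For BBC-Termination I would chain Lemmas~\ref{lem:enough}, \ref{lem:rp2}, \ref{lem:sync}, and \ref{lem:term}. Lemma~\ref{lem:enough} guarantees that non-faulty processes never get stuck at line~\ref{BBC-14} for lack of a valid value to broadcast, so each non-faulty process keeps advancing through rounds as long as its timers eventually expire. Lemma~\ref{lem:term} then yields that, under $\ES$, there is a round $r^*$ from which on all non-faulty processes execute rounds synchronously. From $r^*$ onward Lemma~\ref{lem:sync} applies and gives a non-faulty decision within $O(t)$ additional rounds. Finally, Lemma~\ref{lem:rp2} upgrades the existence of a single deciding non-faulty process to a decision by every non-faulty process within at most two extra rounds, establishing BBC-Termination.

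I do not expect a genuine obstacle here, since the heavy lifting has been done by the preceding lemmas; the main care point is the bookkeeping in the validity argument, where one must be explicit that the only way $\neg v$ could be decided is via line~\ref{BBC-11}, and that the $n-t$-signature threshold is unreachable when only Byzantine processes ever sign $\neg v$. Once that is spelled out the theorem follows by invoking the lemmas as listed above.
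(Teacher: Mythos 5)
Your proof is correct and takes essentially the same route as the paper: both simply assemble the preceding lemmas, with Termination from Lemmas~\ref{lem:rp2}, \ref{lem:sync}, and~\ref{lem:term}, Agreement from Lemma~\ref{lem:dec}, and Validity from Lemma~\ref{lem:propose}. Your validity argument is in fact slightly more careful than the paper's one-line citation (you spell out via Lemma~\ref{lem:maj} why the $n-t$ threshold for $\neg v$ is unreachable), and you attribute the lemmas to Agreement and Validity in the correct order, whereas the paper's ``respectively'' appears to swap them.
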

\begin{proof}
  First recall the definition of Binary Byzantine Consensus.
  \begin{itemize}
  \item BBC-Termination. Every non-faulty process eventually decides on a value.
  \item BBC-Agreement.   No two non-faulty processes decide on different values.
  \item BBC-Validity.  If all non-faulty processes propose the same value, no
    other value can be decided.
\end{itemize}
Lemma~\ref{lem:rp2} ensures that if a non-faulty process decides then all
non-faulty processes decide, while Lemmas~\ref{lem:sync} and~\ref{lem:term} ensure all non-faulty processes decide in the presence
$\ES$, thus ensuring BBC-Termination.
BBC-Agreement and BBC-Validity are ensured by Lemmas~\ref{lem:propose} and~\ref{lem:dec} respectively.
\end{proof}

\section{Implementation and experiments.}

\paragraph{Stopping and garbage collection.}
The algorithm shown in Figure~\ref{algo-BBC} continues to execute rounds forever.
To avoid this, if a non-faulty process decides in round $r$ it can simply broadcast a "proof" of decision, containing the $n-t$
messages that allowed it to decide and stop immediately.
Furthermore, the broadcast of this message may be delayed until the process receives a valid message from another process
from round $r+1$, ensuring that if all processes decide in round $r$ then no extra messages will be sent.
Note that, in implementation, a process can not be immediately garbage collected as it needs
to ensure that its messages are reliably delivered
(reliable channels are often implemented through the use of re-transmissions when needed).
Fortunately, in a system that is executing multiple consensus instances, garbage collection of earlier
instances can be easily coordinated in later instances (this is not described here as it depends on
the requirements of the specific system).

\paragraph{Timeouts and coordinators.}
As described in Section~\ref{sec:timers}, the algorithm does not always need to use timeouts or a coordinator
to terminate.
Disabling timeouts and not using the round coordinator until round $10$ was found to be
a good trade-off, allowing the algorithm to terminate quickly, while still ensuring progress.

\paragraph{Cryptographic signatures and validity proofs.}
Like timeouts and coordinators, including proofs of validity with messages is necessary for the
correctness of the algorithm, but are not often needed in the expected case.
In fact in the presence of reliable channels the validity proofs are needed only in the
case of Byzantine faults.
Given this, for efficiency an implementation may choose not to include proofs with messages by default
and instead have processes request proofs from the sender of the message if the recipient
cannot validate the message itself. Notice that this does not effect the correctness of the algorithm,
but only increases the needed synchrony window to include enough time for a non-faulty process
to request and receive missing proofs from other non-faulty processes.

\paragraph{Threshold signatures.}
Another way to implement validity proofs efficiently is through the use of threshold signatures~\cite{CH89, D88, DF90, R98}.
When using threshold signatures, a set of signatures of the same message from different processes can be combined into
a single shared signature. In the case of this algorithm $n-t$ threshold signatures can be used,
allowing the validity proof to be reduced to a single value in most cases.
To enable threshold signatures, a distributed key generation protocol is usually required to be executed before the first
consensus iteration in order to compute the shared keys.

\subsection{Experiments}\label{sec:exp}
The algorithm has been implemented using the Go~\cite{GO} programming language.
Reliable channels are implemented by using message re-transmission.
All received messages are stored to disk in an append only log allowing processes
to recover after a failure.
Signatures are implemented using the ECDSA implementation included
in the Go standard library~\cite{GOECDSA}. Proofs of validity are transmitted on request
of the recipient as described previously.

The experiments were run on Amazon EC2 using from 75 to 300 c5.large
instances (4 GiB of memory, 2 vCPUs, EBS backed storage).
The instances were spread evenly across EC2's 15 regions
in Asia, Australia, Europe, North America, and South America.

In the experiments each node chooses a random initial binary proposal using
a threshold given by the experiment, then run consensus 10 times.
Results are then calculated as the average of the those runs.
The thresholds are chosen as 25, 50, and 75 percent, where for example
25 percent would mean approximately 25 percent of nodes choose 1 and their
initial proposal with the remaining nodes choosing 0.
All nodes are non-faulty.

Figure~\ref{fig:exp} shows the results of the experiment.
Figure~\ref{fig:lat} shows the average latency of executing a single
consensus instance, Figure~\ref{fig:round} shows the average termination
round the consensus instances, Figure~\ref{fig:msgs} shows the average number
of messages sent for a single consensus instance for all nodes.

Given that the consensus can decide $1$ on round $1$ and $0$ on round $2$,
with $75$ percent $1$ proposals termination happens on the first round, and with
$25$ percent $1$ proposals termination happens on the second round.
With $50$ percent $1$ proposals and given the randomization of the experiment, the termination round
varies in each case, but stays below $2$ on average.
A maximum termination round of $8$ was observed.
The latency is related directly to the termination round, with
$195$ milliseconds being the minimum latency and $293$ milliseconds being the maximum latency.
Increasing the number of nodes from $75$ to $300$ had minimal impact on latency.
The number of messages sent is quadratic to the number of nodes multiplied
by the number of rounds.

\begin{figure*}[ht!]
  \begin{subfigure}{.33\textwidth}
    \includegraphics[scale=0.60]{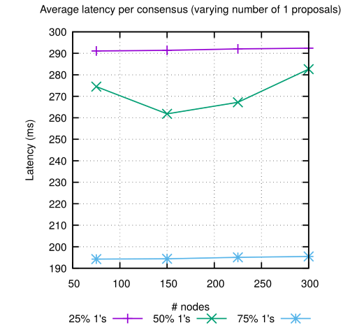}
    \caption{Latency}\label{fig:lat}
  \end{subfigure}
  \begin{subfigure}{.32\textwidth}
    \includegraphics[scale=0.60]{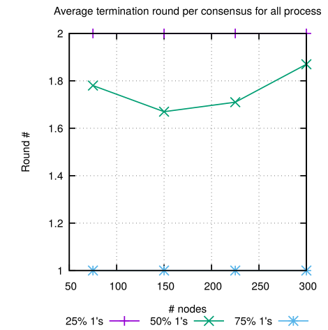}    
    \caption{Termination round}\label{fig:round}
  \end{subfigure}
  \begin{subfigure}{.33\textwidth}
    \includegraphics[scale=0.41]{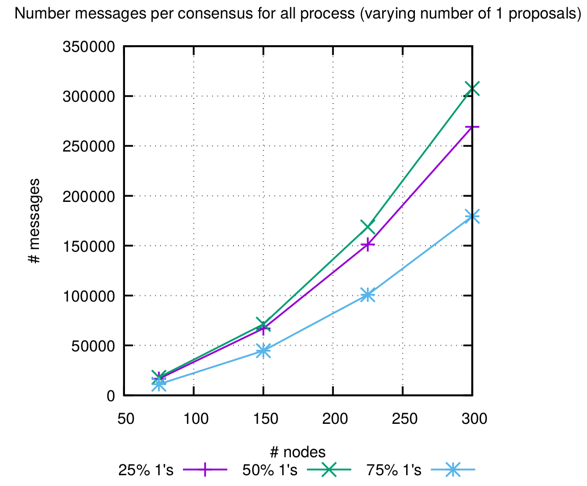}
    \caption{Messages sent}\label{fig:msgs}
  \end{subfigure}
  \caption{Experiment results varying the number of instances and the number of initial 1 proposals.}
  \label{fig:exp} 
\end{figure*}

\end{document}